\newcommand{\al}{\mbox{$\alpha$}}
\newcommand{\minimize}{\mbox{\rm minimize }}
\newcommand{\maximize}{\mbox{\rm maximize }}
\newcommand{\st}{\mbox{\rm subject to }}
\newcommand{\Z}{\mbox{\rm\bf Z}}
\newcommand{\Zplus}{\Z^+}
\newcommand{\implies}{\mbox{${\Rightarrow}$}}
\def\fnum@figure{{\bf Figure \thefigure}}
\def\fnum@table{{\bf Table \thetable}}
\long\def\@mycaption#1[#2]#3{\addcontentsline{\csname
 ext@#1\endcsname}{#1}{\protect\numberline{\csname
  the#1\endcsname}{\ignorespaces #2}}\par
     \begingroup
       \@parboxrestore
          \small
       \@makecaption{\csname fnum@#1\endcsname}{\ignorespaces
#3\endgroup}
      }
\begin{document}

\title{A Market for Air Traffic Flow Management}

\author{{\Large\em Vijay V. Vazirani}\thanks{College of Computing, Georgia Institute of Technology, Atlanta, GA 30332--0280.
Email: {\sf vazirani@cc.gatech.edu}.
Research supported by NSF Grants CCF-0728640 and CCF-0914732, and a Google Research Grant.}
}


\maketitle

\begin{abstract}
The two somewhat conflicting requirements of efficiency and fairness make ATFM an unsatisfactorily solved problem, despite
its overwhelming importance. In this paper, we present an economics motivated solution that is based on the notion of a free market. 
Our contention is that in fact the airlines themselves 
are the best judge of how to achieve efficiency and our market-based solution gives them the ability to pay, at the going rate, to buy 
away the desired amount of delay on a per flight basis.

The issue of fairness is simply finessed away by our solution -- whoever pays gets smaller delays. We show how
our solution has the potential of enabling travelers from a large spectrum of affordability and punctuality requirements to 
achieve an end that is most desirable to them. 

Our market model is particularly simple, requiring only one parameter per flight from the airline company. 
Furthermore, we show that it admits a combinatorial, strongly polynomial algorithm for computing an equilibrium 
landing schedule and prices. 
\end{abstract}

\newpage

\section{Introduction}
Air Traffic Flow Management (ATFM) remains an unsatisfactorily solved problem despite massive efforts on the part of
the U.S. Federal Aviation Administration (FAA), airline companies, and even academia. 
Moreover, this comes despite the fact that, especially in inclement weather, this state of affairs 
results in huge monetary losses and delays\footnote{According to \cite{Bertsimas2}, the U.S. Congress Joint Economic Committee estimated that
in 2007, the loss to the U.S. economy was \$25.7 billion, due to 2.75 million hours of flight delays. In contrast, the total profit of 
U.S. airlines in that year was \$5 billion.}.
In a nutshell, the reason for this is that any viable solution needs to
satisfy two somewhat conflicting requirements: in addition to ensuring efficiency it also needs to be viewed
as ``fair'' by all parties involved. Indeed, \cite{Bertsimas3} state that `` ... While this work points at the possibility of dramatically
reducing delay costs to the airline industry vis-a-vis current practice, the vast majority of these proposals remain unimplemented. The
ostensible reason for this is fairness ... .''

Within academia, research on this problem started with the pioneering work of Odoni \cite{Odoni} and it flourished with the
extensive work of Bertsimas; we refer the reader to \cite{Bertsimas2, Bertsimas1} for thorough literature overviews and
an up-to-date list of references to important papers.

In this paper, we are proposing a radically different solution to the problem. Our solution is based on over a century
of work in mathematical economics that has established the free market principle of Adam Smith as a
remarkably efficient method for allocating scarce resources among alternative uses (sometimes stated in the colorful language
of the ``invisible hand of the market''). The goods in our novel market are delays and the buyers are airline companies; the latter
pay money to the FAA to buy away the desired amount of delay on a per flight basis. Our solution achieves efficiency as well as
fairness in a totally different manner than the solutions proposed so far.

In the centralized solutions proposed so far, FAA decides a schedule that is efficient, e.g., it decides
which flights most critically need to be served first in order to minimize cascading delays in the entire system.
Our contention is that in fact {\em the airlines themselves are the best judge of how to achieve efficiency.} They know best if a certain
flight needs to be served first because it is carrying CEOs of important companies who have paid a premium in order to reach their
destination on time or if delaying a certain flight by 30 minutes will not have dire consequences, however delaying it longer would 
propagate delays through their entire system and result in a huge loss. For such critical flights, our market-based solution gives the airline 
company the ability to pay, at the going rate, to buy away the desired amount of delay.

Interestingly, the sticky issue of fairness is simply finessed away by our solution -- whoever pays gets smaller delays. Moreover, it opens 
up the possibility of making diverse types of travelers happy through the following mechanism: the additional revenues generated by
FAA via our market gives it the ability to subsidize landing fees for low budget airlines. As a result, both types of travelers can
achieve an end that is most desirable to them, business travelers and casual/vacation travelers. The former, in inclement weather,
will not be made to suffer delays that ruin their important meetings and latter will get to fly for a lower price (and perhaps sip
coffee for an additional hour on the tarmac, in inclement weather, while thinking about their upcoming vacation).  

As is turns out, designing a market for this problem is not straightforward -- the requirements on the solution are many and
the classical market models of Fisher \cite{scarf} and Arrow-Debreu \cite{AD} simply do not apply to this setting. 
The most important requirements are:

\begin{itemize}
\item 
{\bf Simplicity:} The information needed from an airline for a particular flight should be very small (in particular, it should not be a utility
function, as needed in the classical market models). Typically flights have numerous special requirements, e.g., because of aircraft type.
Also, they have a myriad interdependencies with other flights -- because of the use of the same aircraft for subsequent flights,
passengers connecting with other flights, crew connecting with other flights, etc. Hence, the basic concept needs to be so simple that it 
leaves plenty of scope for dealing with these complexities.

\item
{\bf Efficient computability:} Today, the moderate and large airports handle hundreds of flights every day; in fact, the 30 busiest
airports handle anywhere from 1000 to 3000 flights per day. Hence, it is important that the schedule and prices be computable efficiently.
Moreover, to realize the desirable consequences of a competitive equilibrium, the prices should be such that there is parity between 
supply and demand -- and this despite recent negative results on computability of equilibria in even fairly special cases of 
traditional market models, e.g., see \cite{VY.plc}.
\end{itemize}

In our market model, the basic solution requires airlines to specify only one parameter, called criticality factor, for each of flight. 
A more elaborate solution, which allows airlines to present a much more complex set of requirements, requires them to specify a
delay function as well (see Section \ref{sec.salient}). We give a special LP in which parameters can be set according to the prevailing 
conditions on a particular day. It turns out that the underlying matrix of this LP is totally unimodular and hence it admits an integral 
optimal solution; such a solution yields an equilibrium schedule for the day. The dual of this LP yields equilibrium landing prices.
We further show that both can be found via an algorithm for the minimum weight perfect $b$-matching problem
and hence can be computed combinatorially in strongly polynomial time. 

In this paper, we handle the case of one airport only and we have deliberately kept the setting as simple
as possible so as to convey the main idea clearly, e.g., we have assumed that all flights are equivalent as far as assignment of
landing slots goes (this is of course far from true in practice, because of the widely varying sizes of aircraft).  
Considering the simplicity and computational efficiency of our solution, we believe there is plenty of potential for
extending this approach to more complex and realistic settings and even to generalize it to scheduling simultaneously for multiple
airports\footnote{According to \cite{Bertsimas2}, current research has mostly remained at the level of a single airport because 
of computational tractability reasons.}. We leave these exciting problems for future research.

\section{The Market Model}
\label{sec.model}

We will consider the problem of scheduling landings at one airport only.
Let $A$ be the set of all flights, operated by various airlines, that land in this airport in a day. 
For $i \in A$, the airline of this flight decides its {\em criticality factor}, denoted by $\al_i$. 
In a formal sense, $\al_i$ is the dollar value, as viewed by the airline, for a delay of unit time incurred by flight $i$. 
Thus, if this flight incurs a delay of $d$ units, the {\em cost of the delay} incurred by the airline is $\al_i \cdot d$ 
(for simplicity, we will first assume that the dollar value to the airline is a linear function of the delay; we will elaborate on this later).

We will assume that the entire day is partitioned into a set of landing time-slots in a manner that is most convenient for this airport.  
Let $B$ denote this set. Each slot $s$ has a capacity $c(s) \in \Zplus$ specifying the number of flights that can land in this time-slot. We will
assume that $c(s)$ is adjusted according to the prevailing weather conditions. 

After taking into consideration the prevailing weather conditions, for each flight $i$, FAA decides a {\em landing window}, denoted by $W(i)$,
which gives the set of time-slots available for this flight. Thus, if there are no delays, the earliest time-slot in $W(i)$
will be the scheduled arrival time of this flight (or a slightly earlier time, if there is a need to plan for before-time arrival).
If there are delays, $W(i)$ is adjusted by the FAA. The exact manner in which this adjustment is made affects the delays of flights
critically and therefore the FAA needs to have a well-thought-out policy for this. This adjustment also gives FAA a way of controlling
delays of individual flights; we explain this further in Section \ref{sec.salient}. For $s \in W(i)$, $d_i(s)$ is the delay incurred by 
flight $i$, beyond its scheduled arrival time, if it were to land in time-slot $s$.

A {\em landing schedule} is an assignment of flights to time-slots, respecting capacity constraints. Each time-slot will be assigned a
{\em landing price} which is the amount charged by FAA from the airline company if its flight lands in this time-slot. 
We will define the {\em total cost} incurred by a flight to be the sum of the price paid for landing and the cost of the delay.

We will say that a given schedule and prices are {\em an equilibrium landing schedule and prices} if:
\begin{enumerate}
\item
W.r.t. these prices, each flight incurs a minimum total cost.
\item
The landing price of any time-slot that is not filled to capacity is zero. This is a standard condition in economics; the
price of a good that is not fully sold must be zero.
\end{enumerate}

\section{LP Formulation and Algorithms}
\label{sec.LP}

In this section, we will give an LP that yields an equilibrium schedule; its dual will yield equilibrium landing prices. 
Section \ref{sec.rounding} shows how they can be computed in strongly polynomial time.

For $s \in W(i)$, $x_{is}$ will be the indicator variable that indicates whether flight $i$ is scheduled in time-slot $s$; naturally, in 
the LP formulation, this variable will be allowed to take fractional values. The LP given below obtains a fractional
scheduling that minimizes the total dollar value of the delays incurred by all flights, subject to capacity constraints of the time-slots.

\begin{lp}
\label{LP}
\minimize & \sum_{i \in A} \al_i \sum_{s \in W(i)} d_i(s) x_{is}  \\[\lpskip]
\st       & \forall i \in A: \ \sum_{s \in R(i)}  x_{is} \geq 1  \nonumber \\
          & \forall s \in B:  \ \sum_{i \ s.t. \ s \in W(i)}   x_{is}  \leq c(s)    \nonumber \\
          & \forall i \in A, \  s \in W(i):  \  x_{is}  \geq 0    \nonumber 
\end{lp}

Let $p_s$ denote the dual variable corresponding to the second set of inequalities. We will interpret $p_s$ as the
price of landing in time-slot $s$.  Thus if flight $i$ lands in time-slot $s$,
the total cost incurred by it is $p_s + \al_i \cdot d_i(s)$.
Let $t_i$ denote the dual variable corresponding to the first set of inequalities. In Lemma \ref{lem.cost} we will prove that $t_i$
is the total cost incurred by flight $i$ w.r.t. the prices found by the dual; moreover, each flight incurs minimum total cost.
  
The dual LP is the following.

\begin{lp}
\label{dual}
\maximize & \sum_{i \in A} t_i \  -  \ \sum_{s \in B} c(s) \cdot p_{s}  \\[\lpskip]
\st       & \forall i \in A, \ \forall s \in W(i):  \  t_{i}  \leq   p_s \  +   \  \al_i  \cdot d_i(s)      \nonumber \\
          & \forall i \in A: \  t_{i} \geq 0  \nonumber \\
          & \forall s \in B:  \   p_{s}  \geq  0    \nonumber 
\end{lp}

\begin{lemma}
\label{lem.cost}
W.r.t. the prices found by the dual LP (\ref{dual}), each flight $i$ incurs minimum total cost and it is given by $t_i$.
\end{lemma}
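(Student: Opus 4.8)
The plan is to obtain Lemma~\ref{lem.cost} as a direct consequence of LP duality for the primal--dual pair (LP~\ref{LP})--(LP~\ref{dual}): dual feasibility (weak duality) supplies the ``lower bound'' half, and complementary slackness against an optimal primal solution supplies the matching ``upper bound'' half. First I would fix an optimal dual solution $(t,p)$ and simply read off its feasibility constraints: for every $i \in A$ and every slot $s \in W(i)$ one has $t_i \le p_s + \al_i \cdot d_i(s)$. In words, $t_i$ is a lower bound on the total cost $p_s + \al_i \cdot d_i(s)$ that flight $i$ would incur by landing in any slot it is permitted to use, so $t_i \le \min_{s \in W(i)} (p_s + \al_i \cdot d_i(s))$; whatever slot flight $i$ ends up in, its total cost w.r.t.\ these prices is at least $t_i$.

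Next I would bring in an optimal primal solution $x^{*}$ and apply complementary slackness. The first family of primal constraints forces flight $i$ to be scheduled with total weight at least one, so $x^{*}_{is} > 0$ for at least one admissible slot $s$; for every such $s$, complementary slackness between the primal variable $x^{*}_{is}$ and the corresponding dual inequality yields $t_i = p_s + \al_i \cdot d_i(s)$. Hence the slot(s) actually used by flight $i$ attain the minimum of the previous paragraph, so $t_i = \min_{s \in W(i)} (p_s + \al_i \cdot d_i(s))$; and since $p_s + \al_i \cdot d_i(s) = t_i$ on the entire support of $x^{*}_{i\cdot}$, while $\sum_s x^{*}_{is} = 1$ whenever $t_i > 0$ (tightness of the first constraint by complementary slackness), the total cost borne by flight $i$ in the schedule equals $\sum_{s} (p_s + \al_i \cdot d_i(s))\, x^{*}_{is} = t_i$. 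Putting the two directions together: every admissible slot costs flight $i$ at least $t_i$ and the slot it is assigned costs exactly $t_i$, which is precisely the assertion that, w.r.t.\ the dual prices, each flight incurs minimum total cost, equal to $t_i$.

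A few points will need care rather than ingenuity. Well-posedness of the duality argument needs the primal to be feasible and bounded; this holds provided every flight has at least one admissible slot (otherwise the dual objective is unbounded in the corresponding $t_i$) and the capacities suffice, assumptions implicit in the model, after which strong duality and attainment of optima are standard. The degenerate case $t_i = 0$ should be noted separately: flight $i$ may then be split fractionally over several slots with total weight exceeding one, but complementary slackness still forces $p_s + \al_i \cdot d_i(s) = 0$ on the whole support, so the realized cost is again $0 = t_i$. I expect the only real bookkeeping obstacle to be aligning the index set $W(i)$ appearing in the dual constraints with the set of slots over which ``minimum total cost'' is quantified and with the set in the first primal constraint; once that is pinned down, the proof is a plain invocation of complementary slackness.
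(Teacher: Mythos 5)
Your proposal is correct and follows essentially the same route as the paper: dual feasibility gives $t_i \le p_s + \al_i \cdot d_i(s)$ for every admissible slot, and complementary slackness applied to the primal variables gives equality on the slots actually used, which is exactly the paper's two-step argument. The extra remarks on feasibility and the degenerate case $t_i = 0$ are sensible bookkeeping but do not change the approach.
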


\begin{proof}
Applying complementary slackness conditions to the primal variables we get
\[ \forall i \in A, \ \forall s \in W(i):  \ \ x_{is} > 0 \ \  \implies \ \
t_{i}  =  p_s \  +   \  \al_i  \cdot d_i(s)  . \]
Moreover, for time-slots $s \in W(i)$ which are not used by flight $i$, i.e., for which $x_{is} = 0$, by the dual
constraint, the total cost of using this slot can only be higher than $t_i$. The lemma follows.
\end{proof}

The second condition required for equilibrium is satisfied because of complementarity applied to the variables $p_s$:
\[ \mbox{If} \ \ \sum_{i \ s.t. \ s \in W(i)}   x_{is}  < c(s), \ \ \mbox{then} \ \ p_s = 0 .\]

\subsection{Finding equilibrium prices and schedule in strongly polynomial time}
\label{sec.rounding}

Since the matrix underlying LP (\ref{LP}) is totally unimodular, it has an integral optimal solution.
In this section, we show that the problem of obtaining such a schedule is essentially a minimum
weight perfect $b$-matching problem and hence can be found in strongly polynomial time, see \cite{schrijver.book} Volume A. 
Furthermore, the dual variables on vertices, $p_s$s and $t_i$s computed by such an algorithm will yield equilibrium prices. 

Consider the edge-weighted bipartite graph $(A', B, E)$, with bipartition $A' = A \cup \{v\}$, where $A$ is the set of 
flights and $v$ is a special vertex, and $B$ is the set of time-slots. The set of edges $E$ and weights are as follows:
for $i \in A, \ s \in W(i)$, $(i, s)$ is an edge with weight $\al_i \cdot d_i(s)$, and for each $s \in B, \ (v, s)$ is an edge
with unit weight.

The matching requirements are: 1 for each $i \in A$, $c(s)$ for $s \in B$, and $\sum_{s \in B} {c(s)} - |A|$ for $v$;
clearly, we may assume that the last quantity is non-negative, because otherwise the LP is infeasible.

It is straightforward to verify that a minimum weight perfect $b$-matching in $G$, if it exists, gives an optimal solution to 
LP ($\ref{LP}$). If such a matching does not exist, the LP is infeasible. If so, the current capacity of time-slots is insufficient 
and either some landing windows need to be stretched appropriately and/or some flights need to be canceled
in order to obtain a schedule.

Hence we get.

\begin{theorem}
\label{thm.main}
There is a combinatorial, strongly polynomial algorithm for computing an equilibrium landing schedule and prices for the 
market model given above.
\end{theorem}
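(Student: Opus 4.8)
The plan is to assemble Theorem~\ref{thm.main} from the pieces already in place. By Lemma~\ref{lem.cost} together with the complementarity observation immediately following it, any integral optimal solution $\{x_{is}\}$ of LP~(\ref{LP}), paired with the prices $\{p_s\}$ and values $\{t_i\}$ from an optimal solution of the dual LP~(\ref{dual}), is an equilibrium landing schedule and set of prices: condition~1 of equilibrium is Lemma~\ref{lem.cost}, and condition~2 is the stated consequence of complementarity on the $p_s$. Since the constraint matrix of LP~(\ref{LP}) is totally unimodular, an integral primal optimum exists. Hence it suffices to give a combinatorial, strongly polynomial procedure that (i) produces an integral optimum of LP~(\ref{LP}), or correctly reports infeasibility, and (ii) simultaneously produces an optimal dual solution meeting the sign requirements.

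For step (i) I would formalize the reduction sketched above. In the bipartite graph $G=(A',B,E)$ with $A'=A\cup\{v\}$, a perfect $b$-matching assigns each flight $i$ to exactly one slot of its window, fills each slot $s$ to exactly $c(s)$, and uses $v$ to soak up the residual capacity $\sum_{s\in B}c(s)-|A|$; conversely, every feasible integral schedule extends uniquely to such a matching. Because every edge incident to $v$ carries the same (unit) weight and $v$ is matched a fixed number of times, the weight of a perfect $b$-matching equals the delay cost $\sum_{i\in A}\al_i\sum_{s\in W(i)}d_i(s)x_{is}$ plus a constant; so a minimum weight perfect $b$-matching is exactly an integral optimum of LP~(\ref{LP}), and non-existence of such a matching is equivalent to infeasibility of LP~(\ref{LP}) (in which case capacities or windows must be revised). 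Minimum weight perfect $b$-matching is solvable combinatorially in strongly polynomial time, see \cite{schrijver.book} Vol.~A, which gives (i).

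For step (ii), the same algorithm returns optimal node potentials for $G$. I would verify that the potential on each slot vertex $s$ may be taken as the landing price $p_s$ and the potential on each flight vertex $i$ as the value $t_i$, after the standard normalization that uses the potential at $v$ to fix the additive degree of freedom. The tightness of potentials on matched edges is precisely the complementary slackness condition exploited in Lemma~\ref{lem.cost}, and the potential inequality on the edges $(i,s)$ with $s\in W(i)$ is precisely the dual constraint $t_i\le p_s+\al_i\cdot d_i(s)$; with $p_s\ge 0$ and $t_i\ge 0$ this makes $(t_i,p_s)$ an optimal solution of the dual LP~(\ref{dual}). The one point needing care — and the main, though minor, obstacle — is the sign/normalization of these potentials: arguing that they can be chosen nonnegative (for instance by using the unit-weight edges at $v$ to shift them), so that the $p_s,t_i$ are genuinely nonnegative equilibrium prices and values. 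Everything else is bookkeeping, and combining (i), (ii), Lemma~\ref{lem.cost}, and the complementarity remark yields the theorem.
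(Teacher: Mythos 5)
Your proposal is correct and follows essentially the same route as the paper: reduce LP~(\ref{LP}) to a minimum weight perfect $b$-matching on the bipartite graph with the auxiliary vertex $v$, invoke the strongly polynomial combinatorial algorithm from \cite{schrijver.book}, and read off equilibrium prices from the dual variables, with equilibrium following from Lemma~\ref{lem.cost} and the complementarity condition on the $p_s$. Your added care about the constant contribution of the unit-weight edges at $v$ and the nonnegativity/normalization of the potentials only fills in details the paper declares straightforward.
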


\section{Salient Features of our Solution}
\label{sec.salient}

We discuss below several salient properties of our solution and the reason it confirms to the requirements of a competitive market.

{\bf How should an airline set the criticality factor for a given flight:}
If this market is run in an open, complete-information manner, it will soon become obvious to airline companies how to set the criticality
factor of each flight suitably to achieve the desired end. The ``going rate'' for buying away delay will essentially be common knowledge.

{\bf How are prices set by the dual:} By standard LP theory, we get that the price dual variables, $p_s$, will adjust
according to the demand of each time-slot, i.e., a time-slot that is demanded by a large number of flights having large $\al_i$s will have 
a higher price and vice versa. In particular, if a slot is not allocated to capacity, its price will be zero as shown above.
Hence, the price of a slot will accurately reflect its demand.

{\bf How does $\al_i$ influence the price paid and delay incurred:}  We first prove the following lemma.

\begin{lemma}
\label{lem.alpha}
The criticality factor, $\al_i$, gives an upper bound on the amount of money 
that will be charged per unit of delay prevented on flight $i$, in an average sense. 
\end{lemma}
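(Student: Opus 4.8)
The plan is to extract the bound directly from the cost-minimization property of Lemma~\ref{lem.cost} together with the second equilibrium condition, so that essentially no new work beyond complementary slackness is needed. Fix a flight $i$. By Lemma~\ref{lem.cost} its total cost is $t_i = p_s + \al_i d_i(s)$ for every slot $s$ with $x_{is}>0$, while $t_i \le p_s + \al_i d_i(s)$ for every $s \in W(i)$. Comparing a slot $s$ actually used by $i$ with any other slot $s' \in W(i)$ carrying a larger delay, $d_i(s') > d_i(s)$, subtracting these two relations gives
\[ p_s - p_{s'} \ \le \ \al_i\bigl(d_i(s') - d_i(s)\bigr). \]
Read left to right, this says exactly what we want slot-by-slot: moving flight $i$ from the later slot $s'$ to the earlier slot $s$ \emph{prevents} $d_i(s')-d_i(s)$ units of delay and costs an \emph{extra} $p_s-p_{s'}$ dollars, so the marginal price paid per unit of delay prevented is at most $\al_i$ (and the statement is trivial when $p_s \le p_{s'}$).

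To put this in the ``average'' form, I would take as reference the worst slot in the window, $s_0 \in \argmax_{s \in W(i)} d_i(s)$, multiply the displayed inequality (with $s'=s_0$) by $x_{is}$, and sum over $s$. Using $\sum_{s} x_{is} = 1$ — which holds at the integral optimum produced by the $b$-matching algorithm of Section~\ref{sec.rounding}, and can be assumed without loss of generality since all $d_i(s)\ge 0$ — this yields $P_i - p_{s_0} \le \al_i\,(d_i(s_0) - D_i)$, where $P_i = \sum_s p_s x_{is}$ is the (expected) landing price charged to $i$ and $D_i = \sum_s d_i(s) x_{is}$ its (expected) delay, and $d_i(s_0)-D_i$ is precisely the delay prevented relative to the worst case. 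Hence the average amount charged to flight $i$ per unit of delay prevented is at most $\al_i$, up to the term $p_{s_0}$.

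The one point that needs care — and, I expect, the reason for the qualifier ``in an average sense'' — is the contribution of $p_{s_0}$. I would argue that the worst-delay slot may be taken to have zero price: delays only push flights to later time-slots, and when the FAA stretches windows to keep the day's schedule feasible there is spare capacity at the tail end of the day, so the latest slot of $W(i)$ is not filled to capacity and equilibrium condition~2 forces $p_{s_0}=0$, giving $P_i \le \al_i\,(d_i(s_0)-D_i)$ outright. If one prefers not to assume slack at the tail, the honest reading of the lemma is the scale-free pairwise statement above: between any two slot options the incremental charge per incremental delay prevented is at most $\al_i$, and averaging over the fractional (or randomized-rounded) assignment preserves this. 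So the main obstacle is not the algebra, which is a one-line consequence of complementary slackness, but fixing the right reference point for ``delay prevented'' so that the claim is simultaneously true and meaningful; everything else is routine.
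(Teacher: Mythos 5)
Your first paragraph is exactly the paper's proof: by Lemma~\ref{lem.cost}, a slot with $k$ fewer units of delay is preferred only if its price exceeds the other's by at most $\al_i \cdot k$, so the charge per unit of delay saved, averaged over those $k$ units, is at most $\al_i$. The additional machinery in your later paragraphs (averaging against $x_{is}$, the worst-delay reference slot $s_0$, and the assumption $p_{s_0}=0$ from slack capacity at the tail) is not needed and is not what the paper means by ``in an average sense''; the pairwise statement you flag as the ``honest reading'' is precisely the intended content.
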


\begin{proof}
Let $p_t$ and $p_{t+k}$ be the prices of landing in slots that result in a delay of $t$ time units and $t+k$ time units, 
respectively. By Lemma \ref{lem.cost}, the former will be preferred to the latter only if $p_t - p_{t+k} \leq \al_i \cdot k$,
and if so, the average amount charged for saving the $k$ units of delay is at most $\al_i$.
\end{proof}

By Lemma \ref{lem.alpha}, a flight with a large $\al_i$ will 
incur a smaller delay, at the cost of paying more, and vice versa, as was desired.

{\bf Fixing the landing window:} A critical decision on the part of FAA, which has a large bearing on running this market in a fair manner, 
is how to fix the landing window for a flight in various operating conditions. Let us first consider an undesirable way of fixing the window. 
Suppose flights $f_1$ and $f_2$ have scheduled arrival times of $t$ and $t+k$, respectively. Suppose FAA assigns identical landing
windows for both flights, starting at time $t' \geq t+k$. Since the price of landing in any given time-slot is the same for both flights
and they both get the same set of options, flight $f_1$ will be experiencing an ``unfair'' additional delay of $k$ time units.

We have already indicated in Section \ref{sec.model} how to fix the landing window in case of no delays. In the presence of
delays, a simple and fair solution is to simply slide the window out equally for each flight by the minimum delay being experienced
at its scheduled landing time and, if the capacities of time-slots are significantly reduced, then stretch the window appropriately 
(otherwise LP (\ref{LP}) may not even be feasible).

Observe that the landing window can be used in another way by FAA -- it gives FAA a way of exercising control, e.g., if it needs to promote/demote
flights without changing the parameters provided by the airlines. 

{\bf Resorting to a non-linear function $d_i(s)$ :} We have stated above that $d_i(s)$ is simply the delay experienced by flight $i$,
beyond its scheduled landing time, if it were to land in time-slot $s$, i.e., it is a linear function of delay. Consider
the example given in the Introduction of a flight that will not have dire consequences if delayed
by 30 minutes, but will lead to cascading delays through the entire system if delayed longer. Interestingly, by moving to a more general 
setup in which airlines are allowed to provide a non-linear function $d_i(s)$ to the FAA, we can allow them to express their desired landing
times even in such complex situations. 

In the example stated above, this function would increase linearly in the first 30 minutes and
then increase very rapidly. As a consequence, the cost of the delay, and hence $t_i$, will increase very rapidly if the delay 
exceeds 30 minutes. Now in order to optimize its objective function, the LP would be willing to assign the earliest available  
landing spot, after 30 minutes, even if it is expensive. Observe that Lemma \ref{lem.cost} holds for an arbitrary function $d_i(s)$ as well.
Effectively, by increasing $d_i(s)$ rapidly beyond 30 minutes, the airline company is indicating that if the delay exceeds 30 minutes,
it is willing to go with a much more expensive landing spot in order to avoid further delays.

Observe that only monotonically increasing functions $d_i(s)$ make sense from the viewpoint of the airlines, since if this
function decreases, then flight $i$ may end up getting a slot with a high price and a large delay.

{\bf How fair is this solution:}  As indicated in the Introduction, our solution has the potential of enabling travelers from a large
spectrum of affordability and punctuality requirements to achieve an end that is most desirable to them. As long as the market is run
in a fair, open manner, and the airlines make clear to the potential passengers the trade offs, the choices made and what to expect, 
the issue of fairness becomes as moot as when a passenger decides whether to pay more and travel comfortably or pay less and risk
ruining their hurting back.

{\bf Incentive compatibility:} We claim, informally, that the airlines will not be able to game this solution by cleverly setting the
criticality factor: If they set $\al_i$ too high, they may end up paying for an expensive landing slot, and if they set it too low,
they may have to suffer a longer delay than desired. We leave the open problems of proving incentive compatibility in a formal sense
and comparing our solution to the VCG solution.

\section{Acknowledgments}

This idea was conceived in a lecture given by Dimitris Bertsimas in September 2010 at Georgia Tech. I am indebted to Dimitris for his fine
lecture and valuable discussions on the ATFM problem. I also wish to thank Kamal Jain for informing me of his paper \cite{tax} which led me
to the LP given here, and to Gagan Goel, Nimrod Megiddo, Laci Vegh and Mihalis Yannakakis for valuable feedback on a draft of this paper.

\bibliography{kelly}
\bibliographystyle{alpha}

\end{document}